\newcommand{\E}{\mathbb{E}}
\DeclareMathOperator{\BDC}{\mathrm{BDC}}
\DeclareMathOperator*{\argmax}{arg\,max}
\DeclareMathOperator{\BSC}{\mathrm{BSC}}
\newcommand{\CC}{\mathcal{C}}
\newcommand{\GG}{\mathcal{G}}
\newcommand{\DD}{\mathcal{D}}
\newcommand{\N}{\mathbb{N}}
\newcommand{\pr}{\mathbb{P}}
\newcommand{\XX}{\mathcal{X}}
\newcommand{\YY}{\mathcal{Y}}
\newcommand{\R}{\mathbb{R}}
\newcommand{\til}[1]{\widetilde{#1}}
\newcommand{\Aut}{\mathsf{Aut}}
\newcommand{\Perm}{\mathsf{Perm}}
\newcommand{\Ne}{\mathsf{ne}}
\newcommand{\supp}{\mathsf{supp}}
\newcommand{\eqdist}{\stackrel{\DD}{=}}
\newcommand{\injectsto}{\xhookrightarrow{}}
\newcommand{\quot}[2]{(\mathclose{#1}/\mathopen{#2})}
\theoremstyle{definition}
\newtheorem{thm}{Theorem}
\newtheorem{lem}[thm]{Lemma}
\newtheorem{cor}[thm]{Corollary}
\newtheorem{conj}[thm]{Conjecture}
\newtheorem{rem}[thm]{Remark}
\newtheorem{prop}[thm]{Proposition}
\newtheorem{defn}[thm]{Definition}
\numberwithin{thm}{section}
\def\BibTeX{{\rm B\kern-.05em{\sc i\kern-.025em b}\kern-.08em
    T\kern-.1667em\lower.7ex\hbox{E}\kern-.125emX}}
\DeclareMathOperator{\Ch}{\mathsf{Ch}}
\begin{document}

\title{On the Symmetries of the Deletion Channel \thanks{Work partially supported by CURIS 2021.}}

\author{\IEEEauthorblockN{Francisco Pernice}
\IEEEauthorblockA{\textit{Stanford University} \\
fpernice@stanford.edu}
}

\maketitle

\begin{abstract}
In this paper, we consider a class of symmetry groups associated to communication channels, which can informally be viewed as the transformations of the set of inputs that ``commute'' with the action of the channel. These groups were first studied by Polyanskiy in \cite{polyanskiy2012}. We show the simple result that the input distribution that attains the maximum mutual information for a given channel is a ``fixed point'' of its group. We conjecture (and give empirical evidence) that the channel group of the deletion channel is extremely small (it contains a number of elements constant in the blocklength). We prove a special case of this conjecture. This serves as some formal justification for why the analysis of the binary deletion channel has proved much more difficult than its memoryless counterparts. 
\end{abstract}

\begin{IEEEkeywords}
channel symmetries, groups, deletion channel.
\end{IEEEkeywords}

\section{Introduction}
Many natural models of communication errors, like those captured by the class of discrete memoryless channels, are by now well understood. Their capacity has been known since Shannon's original paper \cite{shannon}, and codes with efficient encoding and decoding algorithms have been proved to achieve the capacity (e.g., \cite{arikan}). By contrast, other similarly natural error models, like those captured by the \emph{binary deletion channel} or other synchronization channels, are much less well understood. For example, the capacity of the deletion channel is unknown, although several lower and upper bounds have been proved \cite{hard-lower-bound, simple-lower-bound-1/9,Mahdi-upper-bounds, mahdi-review}. In this paper, we work towards an answer to the following question: can we give formal justification for why the binary deletion channel is so much more difficult to analyze than, say, the binary symmetric channel? In a talk in 2008, Mitzenmacher \cite{mitzenmacher-talk} gave the following example to illustrate the difference between the two. Consider the strings 
\[
s_1 = 00000,\qquad s_2=01010.
\]
From the point of view of the binary deletion channel, these two strings seem quite different: for example, deleting any one bit from $s_1$ will produce the same output, while deleting different single bits from $s_2$ will produce all different outputs. On the other hand, from the point of view of the binary symmetric channel, these two strings seem ``equivalent,'' in the sense that there is no clear formal way to distinguish them in terms of the consequences that bit flips have on them. Mitzenmacher went on to say that ``erasure and error channels have pleasant symmetries;  deletion channels do not'' and that ``understanding this asymmetry seems fundamental'' \cite{mitzenmacher-talk}.

In this paper, we consider a symmetry group $\GG_{\Ch}$ associated to any given communication channel $\Ch$, which can informally be viewed as the set of transformations of the set of inputs that ``commute'' with the action of the channel. These groups were first studied in the context of nonasymptotic coding converse bounds by Polyanskiy \cite{polyanskiy2012}, but as far as we know, have never been applied to the deletion channel. For a general class of channels, we show the simple result that the distribution over the inputs that maximizes the mutual information between the input and output of a given channel is a ``fixed point'' of the action of the channel's group. This is analogous to Polyanskiy's results \cite{polyanskiy2012} on the invariance under these group actions of the distributions which solve minimax problems at the heart of finite-length channel coding converse results \cite{finite-length}. These invariance theorems motivate the study of these groups as a coarse measure of the ``hardness'' of a channel: for channels with large symmetry groups, one can vastly simplify capacity calculations, while for asymmetric channels, one gets no such assistance.

We compare the cases of the binary symmetric channel (BSC) and binary deletion channel (BDC). The case of the BSC was already studied by Polyanskiy \cite{polyanskiy2012}; we mainly treat it here to introduce our notation and for contrast with the case of the BDC, which is our main focus. We conjecture (and give empirical evidence) that the channel group of the  BDC is extremely small (it contains a number of elements constant in $n$); this is in contrast with memoryless channels, whose group size was shown by Polyanskiy \cite{polyanskiy2012} to grow at least like $n!$ (by virtue of a natural inclusion $\iota :S_n\injectsto \GG_{\Ch}$ for $\Ch$ memoryless). We prove a special case of our conjecture: within the class of symmetries given by permutations of the indices, $\GG_{\BDC}$ has only two elements.

Given the channel group, one can define the natural induced notion of equivalence between strings: we say strings $s_1$ and $s_2$ are equivalent if they lie on the same orbit of the channel group action on $\{0,1\}^n$, i.e. if there exists a group element mapping $s_1$ to $s_2.$ In this formal sense, the two example strings given by Mitzenmacher are equivalent with respect to the binary symmetric channel, but not with respect to the binary deletion channel. More generally, given any code $\CC$ for a channel, applying any group element to all its codewords yields a new code $\CC'$ that is ``equivalent'' to $\CC$ in a formal sense: $\CC$ and $\CC'$ have the same number of codewords, and the existence of a decoder for $\CC$ implies the existence of a decoder for $\CC'$ with the same error probability, and vice versa.


\subsection{Correction} 
A prior version of this manuscript incorrectly claimed that channel groups were first defined in this work. We thank anonymous reviewers for their helpful comments, which led the author to the work of Polyanskiy \cite{polyanskiy2012} where, as discussed above, channel groups had already been defined and studied in a different context. Moreover, the prior version contained a theorem regarding the uniqueness of the channel group for channel families forming homogeneous Markov chains; the proof had a bug, so the result has been removed.

\section{Definitions}
\subsection{Notation and Elementary Definitions}
Throughout this paper, for $\XX$ a set (alphabet), $\XX^n$ denotes the set of strings of $n$ symbols from $\XX;$ we also let $\XX^* = \bigcup_{j=0}^\infty \XX^j$ and $\XX^{\leq n} = \bigcup_{j=0}^n \XX^j$. For $x\in \XX^n,$ we let $x_i^j \in \XX^{j-i+1}$ denote the substring of $x$ starting at index $i$ and ending at $j,$ inclusive; unless otherwise specified, we let $x_i:=x_i^i$. If $\XX$ can naturally be viewed as a field, we use $x\in \XX^n$ to refer to the vector space element or the string interchangeably. For $x,y\in \XX^*,$ we let $xy$ denote their concatenation. For $x\in \XX^n,$ we let $|x|=n$ denote the string length. All logs (hence entropies, etc.) in this paper are of base equal to the alphabet size unless otherwise specified.

To treat symmetry groups of general channels, it will be useful to view a channel as acting on strings of arbitrary length.
\begin{defn}
For $\Omega$ a probability space and $\XX, \YY$ sets (alphabets),\footnote{In this paper, we take all alphabets to be finite.} a channel is a map $\Ch: \XX^*\times \Omega\to \YY^*.$ For $x\in \XX^*,$ we write $\Ch x$ for the random variable $\omega\mapsto \Ch(x,\omega).$
\end{defn}
For completeness, we give definitions of memoryless channels, the binary symmetric channel and the binary deletion channel in this notation.
\begin{defn}\label{defn:memoryless-channels}
A channel $\Ch:\XX^*\times \Omega \to \YY^*$ is \emph{memoryless} if $x\in \XX$ implies $\Ch x \in \YY$ with probability 1, and for $x\in \XX^n$ we have $\Ch x \eqdist (\Ch^1 x_1)\dots(\Ch^n x_n),$ where $\eqdist$ denotes equality in distribution and the $\Ch^i$ are independent copies of $\Ch.$
\end{defn}
\begin{defn}
Let $\XX = \YY=\{0,1\}, p\in [0,1],$ and $\Omega = \{0,1\}^\infty$ (the infinite product space) with a $Bernoulli(p)^\infty$ measure (the infinite product measure). The \emph{binary symmetric channel} acts on an input $x\in \{0,1\}^n$ as $\BSC_p(x,\omega)= x + \omega_1^n,$ where addition is elementwise and mod 2. The \emph{binary deletion channel} acts on an input $x\in \{0,1\}^n$ as $\BDC_p(x,\omega) = x_{i_1}x_{i_2}\dots x_{i_k},$ where $|x| - k$ is the hamming weight of (number of ones in) $\omega_1^n$, and $i_j$ is the index of the $j$th zero in $\omega.$
\end{defn}

It will sometimes be useful to consider the action of the channel only on strings of a particular length.
\begin{defn}
Let $\Ch:\XX^*\times \Omega\to \YY^*$ be a channel, $n\in\N,$ and let $\Ch|_n:\XX^n\times \Omega\to\YY^*$ be the restriction of $\Ch$ to the strings of length $n$. Suppose there exists $k=k(n)\in \N$ such that the image of $\Ch|_n$ is contained in $\YY^{\leq k}.$ In that case let $m=m(n)$ the minimal such $k$. Then the \emph{$n$th transition matrix of $\Ch$} is the linear map $M_n:\R^{\XX^n}\to \R^{\YY^{\leq m}}$ giving the transition probabilities of $\Ch|_n.$
\end{defn}

Finally we define the automorphism group of a set; the channel symmetry groups we will study will be \emph{subgroups} of the automorphism group of the message set.
\begin{defn}
    Given a set $A$, the \emph{automorphism group} of $A$, denoted $\Aut(A),$ is the set of bijections from $A$ to itself.
\end{defn}
In the case where $A$ is finite, we have $\Aut(A)\cong S_{|A|},$ the group of permutations of $|A|$ elements.

\subsection{Channel Symmetry Groups}
Given a channel $\Ch$ over an alphabet $\XX$, we consider the subgroup of elements of $\Aut(\XX^*)$ which ``commute with $\Ch$.'' 
\begin{defn}\label{defn:channel-group}
Given a channel $\Ch:\XX^*\times \Omega\to \YY^*$, we let the \emph{channel group} of $\Ch$ be defined as 
\begin{align*}
    \GG_{\Ch} = \{g\in \Aut(\XX^*) : \exists h\in \Aut&(\YY^*) ,\; (h\Ch g) x \eqdist \Ch x,\\
    & |g(x)|=|x| \;\forall x\in \XX^*\},
\end{align*}
where $\eqdist$ denotes equality in distribution, and by the conjugation $(h\Ch g)x$ we mean the random variable $\Omega \ni \omega \mapsto h(\Ch(g(x), \omega)).$ 
\end{defn}
\begin{rem}
In the case where $\XX=\YY,$ in many channels of interest one can fix $h=g^{-1}$ without loss of generality. In that case $\GG_{\Ch}$ is \emph{exactly} the set of bijections that commute (in the sense of equality in distribution) with $\Ch.$
\end{rem}
The above definition is a special case of the one given in Section VI.A of \cite{polyanskiy2012}; our definition differs from that one in the following two ways:
\begin{enumerate}
    \item In \cite{polyanskiy2012}, the domain and range of the channels in question are not required to have a product-like structure, or even to be countable.
    \item In \cite{polyanskiy2012}, the group elements are not required to preserve string length.
\end{enumerate}
Both of these assumptions hold automatically for many channels and symmetries of interest. The first one will simplify our proofs; the second one will allow us to have a well-defined notion of how the group ``grows'' with the blocklength: 
\begin{defn}\label{defn:group-size}
For each $n,$ consider the restrictions of the elements of $\GG_{\Ch}$ to $\{0,1\}^n$, to obtain a group $\GG_{\Ch}^n\subseteq \Aut(\{0,1\}^n)$ with a (not necessarily canonical) inclusion $\GG_{\Ch}^n\injectsto \GG_{\Ch}$. We say that the size of the channel group $\GG_{\Ch}$ at blocklength $n$ is $|\GG_{\Ch}^n|.$
\end{defn}

It's an elementary exercise to check that \ref{defn:channel-group} indeed defines a group in the formal sense; we include a proof in the appendix for completeness:
\begin{lem}\label{lem:channel-group-is-group}
The channel group $\GG_{\Ch}$ is a group.
\end{lem}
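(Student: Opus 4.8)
The plan is to verify the three group axioms for $\GG_{\Ch}$ as a subset of $\Aut(\XX^*)$: closure under composition, existence of the identity, and closure under inverses. Throughout, I will keep track of the ``witness'' $h\in\Aut(\YY^*)$ accompanying each $g\in\GG_{\Ch}$, so a group element is really a pair $(g,h)$ with $(h\Ch g)x\eqdist \Ch x$ and $|g(x)|=|x|$ for all $x$, and one projects to $g$.

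\textbf{Identity.} Taking $g=\mathrm{id}_{\XX^*}$ and $h=\mathrm{id}_{\YY^*}$, we trivially have $(h\Ch g)x = \Ch x \eqdist \Ch x$ and $|g(x)|=|x|$, so $\mathrm{id}\in\GG_{\Ch}$, which also shows $\GG_{\Ch}$ is nonempty.

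\textbf{Closure.} Suppose $g_1,g_2\in\GG_{\Ch}$ with witnesses $h_1,h_2$ respectively. I claim $g_1 g_2$ (composition as functions) lies in $\GG_{\Ch}$ with witness $h_1 h_2$. The length condition is immediate: $|g_1(g_2(x))|=|g_2(x)|=|x|$. For the distributional identity, the key point is that the equality in distribution $(h_i\Ch g_i)x\eqdist \Ch x$ holds \emph{for every input string}, so in particular one may substitute $g_2(x)$ for $x$ in the relation for $g_1$. Concretely, for fixed $x$, the random variable $\omega\mapsto h_1(h_2(\Ch(g_1(g_2(x)),\omega)))$ has the same distribution as $\omega\mapsto h_2(\Ch(g_2(x),\omega))$ (applying the $g_1$ relation at input $g_2(x)$, then pushing forward by the fixed map $h_2$, which preserves equality in distribution), which in turn has the same distribution as $\Ch x$ (applying the $g_2$ relation at input $x$). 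Chaining these gives $((h_1 h_2)\Ch (g_1 g_2))x\eqdist \Ch x$. Since $g_1 g_2\in\Aut(\XX^*)$ and $h_1 h_2\in\Aut(\YY^*)$, we conclude $g_1 g_2\in\GG_{\Ch}$.

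\textbf{Inverses.} Suppose $g\in\GG_{\Ch}$ with witness $h$. I claim $g^{-1}\in\GG_{\Ch}$ with witness $h^{-1}$. The length condition follows because $g$ preserves lengths and is a bijection, so $g^{-1}$ does too: writing $y=g(x)$, we get $|g^{-1}(y)|=|x|=|g(x)|=|y|$ as $y$ ranges over $\XX^*$. For the distributional identity, start from $(h\Ch g)x\eqdist \Ch x$, valid for all $x$; substitute $g^{-1}(x)$ for $x$ to get $(h\Ch g)(g^{-1}(x)) = h(\Ch(x,\cdot))\eqdist \Ch(g^{-1}(x),\cdot)$, i.e.\ $h\Ch x\eqdist \Ch g^{-1}x$. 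Now push both sides forward by the fixed bijection $h^{-1}$: equality in distribution is preserved under applying a fixed measurable map, so $\Ch x\eqdist (h^{-1}\Ch g^{-1})x$ for all $x$, which is exactly the required relation for $g^{-1}$ with witness $h^{-1}$.

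The only mild subtlety — and the step I would be most careful about — is the repeated use of the fact that if two $\YY^*$-valued random variables $U,V$ satisfy $U\eqdist V$, then $\phi(U)\eqdist\phi(V)$ for any fixed map $\phi:\YY^*\to\YY^*$; since $\YY^*$ is countable and we are working with discrete probability, this is immediate, and it is what licenses ``conjugating by $h_2$'' in the closure step and ``conjugating by $h^{-1}$'' in the inverse step. One should also note that composition in $\Aut(\XX^*)$ is associative and that $\Aut(\XX^*)$ is itself a group, so $\GG_{\Ch}$ being a composition-closed, inverse-closed, identity-containing subset makes it a subgroup, hence a group.
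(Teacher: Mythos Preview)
Your approach matches the paper's exactly: verify identity, closure, and inverses by tracking the witness $h$ alongside $g$. The identity and inverse arguments are correct and essentially identical to the paper's.

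There is, however, a genuine slip in the closure step. You claim the witness for $g_1g_2$ is $h_1h_2$ and write the random variable $\omega\mapsto h_1(h_2(\Ch(g_1(g_2(x)),\omega)))$, then assert it equals in distribution $h_2(\Ch(g_2(x),\omega))$ by ``applying the $g_1$ relation at input $g_2(x)$, then pushing forward by $h_2$.'' But that justification actually produces the \emph{other} order: the $g_1$ relation at $g_2(x)$ gives $h_1\Ch g_1g_2x\eqdist \Ch g_2x$, and pushing forward by $h_2$ yields $h_2h_1\Ch g_1g_2x\eqdist h_2\Ch g_2x$. Your displayed random variable has $h_1$ on the outside and $h_2$ on the inside, so the step as written does not follow. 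The correct witness is $h_2h_1$, which is exactly what the paper uses:
\[
h_2h_1\Ch g_1g_2x = h_2\bigl((h_1\Ch g_1)(g_2x)\bigr)\eqdist h_2\Ch g_2x\eqdist \Ch x.
\]
This is purely an order-of-composition bookkeeping error and the fix is immediate, but as stated your closure argument does not go through.
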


Given the channel group $\GG_{\Ch},$ we can define the natural induced notion of equivalence between strings.
\begin{defn}
Given two strings $x,y\in \XX^*$ and a channel $\Ch$ over the alphabet $\XX^*,$ we say $x$ and $y$ are \emph{equivalent with respect to }$\Ch$, and write $x\sim y,$ if there exists $g\in \GG_{\Ch}$ such that $gx=y.$  We then define the equivalence class of $x\in \XX^*$ as $[x]=\{x'\in \XX^*: x'\sim x\}.$
\end{defn}
It's again an elementary exercise to check that this defines an equivalence relation in the formal sense, and hence partitions the space of messages $\XX^*$ into a new set of disjoint equivalent classes (or \emph{orbits}) $\quot{\XX^*}{\sim}= \{[x]:x\in \XX^*\},$ called the \emph{quotient space} of $\XX^*$ by $\sim.$ We remark that, given a code $\CC = \{\CC_n\}_{n\in \N}$ for $\Ch$, where $\CC_n\subseteq \XX^n$, and a group element $g\in \GG_{\Ch},$ we can define a new code $g\CC = \{g\CC_n\}_{n\in \N},$ where by $g\CC_n$ we mean $\{g c : c\in \CC_n\}.$ The code $g\CC$ can be decoded on the channel $\Ch$ by applying $g^{-1}$ to the received message, and then using any decoder for $\CC$; by the definition of $\GG_{\Ch},$ this new decoder will have the same probability of error as the decoder for $\CC.$

\section{Invariance Theorem}\label{sec:general-results}
As a simple general result, we show that, for a wide class of channels, the distribution over the input strings that achieves the maximum mutual information between input and output is a fixed point of the action of the channel group. As was mentioned, this is a similar flavor of result to Theorem 20 in \cite{polyanskiy2012}, but in the setting of channel capacity instead of finite blocklength performance minimax bounds. As we illustrate in Section~\ref{sec:examples}, the conditions of the following theorem apply broadly.
\begin{thm}\label{thm:max-mutual-inf}
Let $\Ch$ be a channel. Suppose the $n$th transition matrix $M_n$ of $\Ch$ exists and is full-rank. If we have
\[\label{eq:variational-problem}
\DD \in \argmax_{\DD'}I(X';Y'), \tag{VP}
\]
where $X'\sim \DD'$ and $Y'=\Ch X'$, then $gX \eqdist X$ for $X\sim \DD$ for all $g \in \GG_{\Ch}$. Above, the maximum is taken over all probability distributions $\DD'$ supported on $\XX^n.$
\end{thm}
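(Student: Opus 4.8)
The plan is to exploit the concavity of mutual information in the input distribution together with the fact that the channel group acts by "symmetries" that leave the mutual information unchanged. Concretely, fix $g \in \GG_{\Ch}$ and let $h \in \Aut(\YY^*)$ be the corresponding output automorphism, so that $(h\Ch g)x \eqdist \Ch x$ for all $x$. First I would show that if $X' \sim \DD'$ and $Y' = \Ch X'$, then setting $\til{X} = g^{-1}X'$ (so $\til X \sim g^{-1}_*\DD'$) gives a pair $(\til X, \Ch \til X)$ whose joint law is obtained from $(X', Y')$ by the deterministic relabelling $x \mapsto g^{-1}x$ on the input side and $y\mapsto h y$ on the output side — here one uses that $g$ preserves string length, so $g^{-1}$ maps $\XX^n$ to $\XX^n$ and the restriction to length-$n$ strings is well defined. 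Since mutual information is invariant under applying a bijection (a deterministic, invertible map) separately to each of the two coordinates, we get $I(\til X; \Ch \til X) = I(X'; Y')$. Hence $\DD$ being a maximizer of (VP) implies $g^{-1}_*\DD$ is also a maximizer, and likewise $g_*\DD$.

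Next I would invoke uniqueness of the maximizer. This is exactly where the hypothesis that $M_n$ is full-rank enters: mutual information $I(X';Y')$ as a function of the input distribution $\DD'$ (a point in the simplex on $\XX^n$) is concave, and it is \emph{strictly} concave when the map $\DD' \mapsto \text{(output distribution)}$ is injective, which holds precisely when $M_n$ has full rank (trivial kernel on $\R^{\XX^n}$). Strict concavity of a function on a convex set forces the argmax to be a single point. Therefore $g_*\DD = \DD$ for every $g \in \GG_{\Ch}$, i.e. $gX \eqdist X$ for $X \sim \DD$, which is the claim.

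The one subtlety I would be careful about is that strict concavity needs to be argued correctly: $I(X';Y')$ need not be strictly concave on the \emph{whole} simplex just because $M_n$ is injective — what is true is that $\DD' \mapsto I$ decomposes as $H(\text{output}) - \sum_x \DD'(x) H(\Ch x \mid X'=x)$, where the second term is linear in $\DD'$ and the first is the entropy of $M_n \DD'$, which is strictly concave \emph{in the output distribution}; composing with the injective linear map $M_n$ keeps it strictly concave in $\DD'$. So the difference is strictly concave (strictly concave plus linear), and this genuinely pins down a unique maximizer. I expect this concavity/uniqueness bookkeeping to be the main point requiring care; the group-action-invariance step is essentially a change-of-variables argument. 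I would also note explicitly that although Definition~\ref{defn:channel-group} only guarantees $h \in \Aut(\YY^*)$ rather than an honest length-preserving map, this does not matter: applying $h$ to $Y'$ is still a bijection of the (countable) set $\YY^*$, hence mutual-information-preserving, regardless of whether it respects lengths.
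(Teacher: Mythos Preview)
Your proposal is correct and follows essentially the same approach as the paper: both argue (i) that the group action preserves mutual information via the bijective relabelling $(x,y)\mapsto(g^{-1}x,hy)$ on the joint law, and (ii) that the full-rank hypothesis on $M_n$ yields strict concavity of $\DD'\mapsto I(X';Y')$ (via the decomposition $H(Y)-H(Y|X)$ with $\DD'\mapsto M_n\DD'$ injective), forcing a unique maximizer. The only cosmetic difference is that the paper verifies the joint-law identity $(g^{-1}X',hY')\eqdist(X,Y)$ by an explicit probability computation, whereas you phrase it as a change of variables; your extra remark that $h$ need not be length-preserving for the mutual-information invariance is a valid observation the paper leaves implicit.
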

We relegate the proof to the appendix. We remark that, by Shannon's Theorem \cite{shannon}, for memoryless channels $\Ch$, if $\DD$ is the mutual-information-maximizing distribution from the theorem above and $X\sim \DD,Y=\Ch X$, then $\frac{1}{n}I(X;Y)$ is the capacity of $\Ch,$ for every $n.$ Even for non-memoryless channels like the deletion channel or other synchronization channels, Dobrushin \cite{generalized-shannons-thm} proved that the capacity is given by the limit of $\frac{1}{n}I(X;Y)$ as $n\to\infty.$ Informally, Theorem \ref{thm:max-mutual-inf} shows that, when looking for the distribution that achieves capacity, we can restrict attention to the distributions which ``respect the symmetries of the channel.'' The following corollary, which follows immediately from Theorem~\ref{thm:max-mutual-inf}, makes this more concrete.
\begin{cor}\label{cor:equiv-classes}
If $\DD$ is the mutual-information-maximizing distribution of Theorem~\ref{thm:max-mutual-inf}, then $\DD$ is uniform when restricted to the subsets of $\XX^n$ that are equivalence classes with respect to $\Ch.$ 
\end{cor}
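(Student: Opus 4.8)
The plan is to unwind the conclusion of Theorem~\ref{thm:max-mutual-inf} --- that $gX\eqdist X$ for $X\sim\DD$ and every $g\in\GG_{\Ch}$ --- into the statement that $\DD$, viewed as a probability vector on $\XX^n$, is constant on each equivalence class, and then note that this is exactly what ``uniform when restricted to each equivalence class'' means. First I would translate the distributional identity into invariance of the measure: since elements of $\GG_{\Ch}$ preserve string length, each $g\in\GG_{\Ch}$ restricts to a bijection of $\XX^n$, and writing $\DD(z)=\pr[X=z]$ for $z\in\XX^n$, the identity $gX\eqdist X$ says $\pr[gX=z]=\pr[X=z]$ for all $z$, i.e. $\pr[X=g^{-1}z]=\pr[X=z]$, i.e. $\DD(g^{-1}z)=\DD(z)$; equivalently $\DD(gw)=\DD(w)$ for every $w\in\XX^n$ and every $g\in\GG_{\Ch}$.

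Next I would fix an equivalence class $C$ that is a subset of $\XX^n$ (by length-preservation, every equivalence class is contained in a single $\XX^n$, so this is the only case to consider) and pick any $x\in C$. For an arbitrary $y\in C$ there is, by the definition of $\sim$, some $g\in\GG_{\Ch}$ with $gx=y$, and the previous step gives $\DD(y)=\DD(gx)=\DD(x)$. Hence $\DD$ takes a common value on all of $C$, which is precisely the statement that the conditional distribution $\DD(\,\cdot\mid C)$ is uniform on $C$ (if $\DD(C)>0$; otherwise the claim is vacuous). Since $\XX^n$ is the disjoint union of the equivalence classes it contains, this proves the corollary.

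I do not expect a real obstacle here: all of the substance is carried by Theorem~\ref{thm:max-mutual-inf}, and the only points requiring care are the routine bookkeeping that ``$gX\eqdist X$'' is equivalent to $\DD$ being a $\GG_{\Ch}$-invariant measure on $\XX^n$, and the use of the (already noted) fact that group elements preserve string length, so that each orbit --- hence each equivalence class --- lives inside a single blocklength.
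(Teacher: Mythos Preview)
Your argument is correct and is exactly the ``immediate'' deduction the paper has in mind: the paper does not write out a proof of the corollary at all, simply noting that it follows at once from Theorem~\ref{thm:max-mutual-inf}. You have just spelled out the obvious step, translating $gX\eqdist X$ into $\DD(gw)=\DD(w)$ and reading off constancy on orbits.
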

In other words, if $x,y\in \XX^n, x\sim y$ with respect to $\Ch$, and $X\sim \DD,$ then $\pr(X=x)=\pr(X=y).$ Hence maximizing the mutual information over all distributions in $\XX^n$ is equivalent to maximizing it over the smaller quotient space $\quot{\XX^n}{\sim}.$ In particular, if we have a sufficiently large channel group, our variational problem \ref{eq:variational-problem} can reduce to a polynomial, or even constant number of variables, as it's easy to see occurs for memoryless channels \cite{polyanskiy2012}.

\section{Examples and Conjecture}\label{sec:examples}
In this section, we compare the examples of the binary symmetric channel and binary deletion channel, we give our conjecture, and we prove a simple special case. The case of the BSC was treated in \cite{polyanskiy2012}, but here we re-derive what we need for completeness and to introduce the ideas that will carry over to the case of the BDC.
\subsection{Binary Symmetric Channel \cite{polyanskiy2012}}

In order for Theorem~\ref{thm:max-mutual-inf} to apply, we need to show that the $\BSC$'s $n$th transition matrix $M_n$ is full-rank.

\begin{lem}\label{lem:bsc-full-rank}
For every $n$ and $p\in (0,1/2)$, the $n$th transition matrix of the $\BSC_p$ is full-rank.
\end{lem}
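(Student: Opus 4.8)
The plan is to identify the transition matrix explicitly, recognize it as a Kronecker power of a single fixed $2\times 2$ matrix, and read off invertibility from that structure.

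First I would pin down $M_n$. For every $x\in\{0,1\}^n$ the channel $\BSC_p$ outputs a string of length exactly $n$, so $m(n)=n$, the image of $\BSC_p|_n$ lies in $\YY^n\subseteq\YY^{\leq n}$, and the transition probability to any output of length $\neq n$ is $0$. Thus, apart from the all-zero rows indexed by shorter outputs (which are irrelevant to the rank), $M_n$ is the $2^n\times 2^n$ matrix whose $(y,x)$ entry is $\pr[\BSC_p x=y]=p^{\,d_H(x,y)}(1-p)^{\,n-d_H(x,y)}$, where $d_H$ denotes Hamming distance. Ordering the coordinates of $\R^{\{0,1\}^n}$ lexicographically (i.e.\ as an $n$-fold product), this matrix factors as $M_1^{\otimes n}$, where $M_1=\left(\begin{smallmatrix}1-p & p\\ p & 1-p\end{smallmatrix}\right)$ is the single-bit transition matrix; this factorization is immediate since $\BSC_p$ acts independently on the coordinates and $d_H$ is additive over coordinates.

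Next I would compute the spectrum. The matrix $M_1$ has eigenvalue $1$ with eigenvector $(1,1)^\top$ and eigenvalue $1-2p$ with eigenvector $(1,-1)^\top$, and both are nonzero for $p\in(0,1/2)$. Since the eigenvalues of a Kronecker product are the products of the eigenvalues of the factors, the eigenvalues of $M_1^{\otimes n}$ are exactly $(1-2p)^{j}$ for $0\leq j\leq n$ (with multiplicity $\binom{n}{j}$), all nonzero; equivalently, by the Kronecker-product determinant identity, $\det M_1^{\otimes n}=\det(M_1)^{\,n2^{n-1}}=(1-2p)^{\,n2^{n-1}}\neq 0$. Hence $M_1^{\otimes n}$ is invertible, so the square block of $M_n$ indexed by length-$n$ outputs is invertible, and therefore $M_n$ has full column rank $2^n$, which is the sense of ``full-rank'' required by Theorem~\ref{thm:max-mutual-inf}.

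I expect no real obstacle here: the argument is a direct computation. The only points requiring a little care are the bookkeeping forced by the $\YY^{\leq m}$ codomain convention (so that ``full-rank'' is read as full column rank, with the length-$n$ rows already furnishing an invertible square block), and the verification of the Kronecker factorization together with the standard eigenvalue/determinant formula for Kronecker products; both are routine.
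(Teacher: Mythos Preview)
Your proposal is correct and follows essentially the same approach as the paper: both recognize $M_n = M_1^{\otimes n}$ via the memoryless property and conclude full rank from the invertibility of $M_1$. You simply supply more detail (the explicit eigenvalue computation and the bookkeeping about the $\YY^{\leq n}$ codomain) than the paper's one-line argument.
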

\begin{proof}
This follows automatically from the fact that the transition matrix for a single bit $M_1$ is clearly full-rank, and that $M_n = M_1^{\otimes n}$ by the memoryless property. 
\end{proof}
We now give two families of examples of elements in $\GG_{\BSC}.$ These examples were already given by Polyanskiy \cite{polyanskiy2012}. In each case, we only specify the action of the group element on strings of a particular length $n.$ A general group element may be formed by any choice a fixed-length transformation per string length.
\begin{enumerate}
    \item\label{perm} \emph{Any permutation of the indices.} This is in fact a subgroup of the channel group of any memoryless channel, as is easily verified.
    \item\label{trans} \emph{Translation by any element.} Fix an element $x \in \{0,1\}^n$ and consider the transformation $y\mapsto y+x,$ where addition is elementwise and mod 2. This is clearly an element of $\GG_{\BSC}$ with inverse equal to itself.
\end{enumerate}
While there may be other families of transformations, just the transformations of type (\ref{trans}) suffice to show that any two strings of the same length are equivalent with respect to the $\BSC.$ Namely, for $y,z\in \{0,1\}^n$, letting $x = z-y,$ the map $w\mapsto w+x$ maps $y$ to $z$, and hence $y\sim z.$ Then $[x]=\{0,1\}^n$ for every $x\in \{0,1\}^n$ and from Corollary~\ref{cor:equiv-classes} we recover the classical result that the maximum mutual information in the $\BSC$ is achieved by a uniform distribution over the input. Finally, we note that, just by considering transformations of type (\ref{perm}), $\GG_{\BSC}$ grows at least like $n!$, in the sense of Definition~\ref{defn:group-size}, and hence grows faster than any exponential function. This will be in stark contrast with the case of the $\BDC$, which we now consider.

\subsection{Binary Deletion Channel}

We check the applicability of Theorem~\ref{thm:max-mutual-inf} by showing that the $n$th transition matrix of the $\BDC$ is full-rank.
\begin{lem}\label{lem:bdc-full-rank}
For every $n$ and $p\in (0,1)$, the $n$th transition matrix of the $\BDC_p$ is full-rank.
\end{lem}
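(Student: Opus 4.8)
The plan is to order the rows and columns of $M_n$ so that it becomes block upper-triangular (or lower-triangular), with each diagonal block itself invertible. Concretely, I would index the columns of $M_n$ by inputs $x\in\{0,1\}^n$ and rows by possible outputs $y\in\{0,1\}^{\leq n}$, and group both by Hamming weight (number of ones). The key observation is that the $\BDC_p$ only deletes symbols, so it cannot \emph{increase} the number of ones: if $y=\BDC_p(x)$ then $\mathrm{wt}(y)\leq \mathrm{wt}(x)$, and moreover $|y|\leq|x|=n$ with the number of deleted ones being $\mathrm{wt}(x)-\mathrm{wt}(y)$. This already gives a coarse triangular structure when we order inputs and outputs by weight. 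The finer point is to handle inputs of the same weight: here I would observe that among all length-$n$ inputs, the channel output of length exactly $n$ equals the input itself (probability $(1-p)^n>0$ of no deletions), so restricting to the square submatrix indexed by $\{0,1\}^n$ on both sides, $M_n$ has a diagonal with all entries $(1-p)^n$ plus contributions only from shorter outputs — but shorter outputs index different rows, so in fact the relevant structure is cleaner than this.

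A cleaner route: I would prove full rank by exhibiting, for each input $x\in\{0,1\}^n$, an output $y(x)$ such that the map $x\mapsto y(x)$ is injective and the matrix is triangular with respect to a suitable total order refining this. The natural choice is $y(x)=x$ itself via the length-$n$ outputs: the $2^n\times 2^n$ submatrix of $M_n$ with rows restricted to length-$n$ outputs is exactly $(1-p)^n I$, since the only way to get a length-$n$ output from a length-$n$ input is to delete nothing, which reproduces the input. A submatrix being invertible ($(1-p)^n I$ is, since $p\in(0,1)$) immediately implies $M_n$ has rank at least $2^n$, and since $M_n$ has exactly $2^n$ columns, it is full (column) rank. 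This is in fact the entire argument and avoids any weight bookkeeping.

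So the real proof is essentially one line: the square block of $M_n$ corresponding to outputs of length $n$ is $(1-p)^n$ times the identity, hence invertible, hence $M_n$ has rank $2^n=\dim\R^{\XX^n}$ and is full-rank. I would just need to state carefully that ``full-rank'' for the non-square matrix $M_n:\R^{\XX^n}\to\R^{\YY^{\leq m}}$ means rank equal to $\min$ of the two dimensions, which is $2^n$ here (since $m\leq n$ forces the codomain to be at least as large, and it is strictly larger for $n\geq 1$), and that having an invertible square submatrix of size $2^n$ witnesses this.

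The only ``obstacle,'' which is really just a modeling subtlety to get right, is the convention for what full-rank means for a rectangular transition matrix and making sure Theorem~\ref{thm:max-mutual-inf} is actually being invoked with the correct notion — one should confirm that the proof of that theorem only uses injectivity of $M_n$ as a linear map (equivalently, trivial kernel / full column rank), which is exactly what the identity-submatrix argument delivers. No genuine difficulty is expected; the memoryless-style tensor trick used for the $\BSC$ does not apply, but the ``delete nothing'' event provides an even more direct handle.
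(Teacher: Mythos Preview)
Your proposal is correct and takes essentially the same approach as the paper: both arguments observe that the submatrix of $M_n$ indexed by length-$n$ outputs is $(1-p)^n I$ (the paper phrases this as ``the $y$th coordinate of $M_n\vec{x}$ is nonzero iff $x=y$''), which immediately gives linear independence of the columns. Your discussion of the Hamming-weight block structure is unnecessary, as you yourself note, and the paper omits it entirely.
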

\begin{proof}
For $x \in \{0,1\}^n,$ let $\vec{x} \in \R^{\{0,1\}^n}$ be the probability vector with a $1$ in the coordinate corresponding to $x$, and zeros in all other coordinates. It suffices to show that $\{M_n \vec{x}\}_{x\in \{0,1\}^n}$ are linearly independent. But note that, for $y\in \{0,1\}^n,$ the $y$th coordinate of $M_n\vec{x}$ is nonzero if and only if $x=y$, and hence they are clearly linearly independent.
\end{proof}
As before, we now give two examples of elements of $\GG_{\BDC};$ it's trivial to verify that these are indeed in the channel group.
\begin{enumerate}
    \item \emph{A flip of all bits.} This is the same as translation by the all-ones string.
    \item \label{rotation} \emph{A rotation about the center.} This is the transformation $x_1x_2\dots x_n \mapsto x_nx_{n-1}\dots x_1.$ 
\end{enumerate}
Each of these operations are of order 2 (they are their own inverses), and they commute. Hence the group they generate is of size 4. Amazingly, we conjecture that these are \emph{all} the symmetries of the $\BDC$! More precisely, these symmetries certainly generate a \emph{subgroup} of $\GG_{\BDC};$ call this subgroup $\til{\GG}_{\BDC}.$ Then two strings $x$ and $y$ are equivalent with respect to $\til{\GG}_{\BDC}$ if there is $g\in \til{\GG}_{\BDC}$ (i.e. either a flip of all bits or a rotation around the center, or their composition, or the identity) such that $gx=y.$ We conjecture that any two strings $x,y\in \{0,1\}^n$ are equivalent with respect to $\GG_{\BDC}$ if and only if they are equivalent with respect to $\til{\GG}_{\BDC}.$ This in particular would imply the following concise statement.
\begin{conj}
The size of any equivalence class (orbit) of $\{0,1\}^n$ under the action of the channel group of the BDC is at most 4, independently of $n.$
\end{conj}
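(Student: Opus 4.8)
The plan is to prove the stronger statement that $\GG_{\BDC}=\til{\GG}_{\BDC}$ at every block length, i.e.\ that each restriction $\GG_{\BDC}^n$ (in the sense of Definition~\ref{defn:group-size}) is contained in the four maps generated by the bit-flip $\phi$ and the reversal $\rho$. Since a length-preserving $g$ sends $x\in\{0,1\}^n$ to another string of length $n$, the orbit of $x$ is exactly $\GG_{\BDC}^n\cdot x$, so $|\GG_{\BDC}^n|\le 4$ immediately yields orbits of size at most $4$.

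The first step is to reduce the defining condition of $\GG_{\BDC}$ to a purely combinatorial identity. Writing $\binom{x}{y}$ for the number of embeddings of $y$ into $x$ as a subsequence, the transition probabilities are $\pr(\BDC_p(x)=y)=p^{|x|-|y|}(1-p)^{|y|}\binom{x}{y}$. I would first show that the witness $h$ in Definition~\ref{defn:channel-group} may be taken to be $g^{-1}$: the output length of $\BDC_p$ on an input of length $n$ is $n$ minus a $\mathrm{Binomial}(n,p)$ variable and is therefore input-independent, and each output distribution has a unique maximal-length atom, namely the input itself, of weight $(1-p)^{|x|}$ (the required uniqueness of this atom holds for all $p\in(0,1)$ because distinct atom weights are distinct polynomials in $p$). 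Together with injectivity of $h$ this forces $h$ to preserve length and to satisfy $h(g(x))=x$, so $h=g^{-1}$ and the commuting relation collapses to the length-independent identity
\[
\binom{g(x)}{g(y)}=\binom{x}{y}\qquad\text{for all }x,y,
\]
exhibiting $g$ as an automorphism of the subsequence-embedding structure on $\{0,1\}^*$.

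From here I would peel off the two known symmetries. Evaluating the identity on one-symbol strings shows $g$ either fixes or swaps $0$ and $1$; composing with $\phi$ I may assume $g$ fixes them, and then invariance of the number of $1$'s forces $g$ to preserve Hamming weight. The crux is to show that a weight-preserving automorphism of the embedding structure must be a permutation of the coordinates. The idea is to exploit the refined probes $\binom{x}{0^a10^b}$ and $\binom{x}{1^a01^b}$, which record, for each $1$ (resp.\ $0$) of $x$, how many $0$'s (resp.\ $1$'s) lie to its left and to its right; these statistics determine the positions of the symbols, and their invariance under $g$ should force $g$ to act through a single coordinate permutation $\sigma$. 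Once $g$ is known to be a coordinate permutation, the special case already established in this paper (the only index permutations in $\GG_{\BDC}$ are $\mathrm{id}$ and $\rho$) gives $g\in\{\mathrm{id},\rho\}$; undoing the bit-flip reduction yields $g\in\{\mathrm{id},\rho,\phi,\phi\rho\}=\til{\GG}_{\BDC}$.

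The main obstacle is precisely this crux step. A priori $g$ is an arbitrary weight-preserving bijection of $\{0,1\}^n$, and the embedding-count constraints, although very numerous, are nonlinear, so it is not automatic that the permutations $g$ induces on the different weight layers are all realized by one common coordinate permutation. I expect the reconstruction from the $0^a10^b$ and $1^a01^b$ probes to be delicate for strings of very unbalanced weight, where one of the two probe families degenerates, and for forcing the layer-wise permutations to agree; making the induction (on Hamming weight, or on block length) close uniformly rather than only for ``generic'' strings is where I anticipate the real difficulty, and is the reason the full statement remains a conjecture while only the coordinate-permutation case is proved here.
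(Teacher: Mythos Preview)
The statement is the paper's open \emph{conjecture}; the paper does not prove it and offers only numerical evidence together with the special case in Proposition~\ref{prop:conj-special-case} (index permutations). There is therefore no paper proof to compare against, and you yourself flag in your final paragraph that your crux step is missing.

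On the outline itself, two remarks. The reduction to $h=g^{-1}$ and hence to the combinatorial identity $\binom{g(x)}{g(y)}=\binom{x}{y}$ is correct in spirit, but your ``distinct polynomials in $p$'' justification is not: the group is defined for a \emph{fixed} $p$, and at $p=1/2$ many atoms share a weight. The clean argument is that a single $h$ must serve all blocklengths; taking the union over $x\in\{0,1\}^n$ of the supports on both sides of $h\,\BDC_p\,g(x)\eqdist\BDC_p\,x$ gives $h^{-1}(\{0,1\}^{\le n})=\{0,1\}^{\le n}$ for every $n$, so $h$ is length-preserving, the unique length-$|x|$ atom on each side forces $h^{-1}(x)=g(x)$, and the binomial prefactors cancel for every $p\in(0,1)$.

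The crux step you isolate---that a weight-preserving bijection of $\{0,1\}^n$ preserving all subsequence-embedding counts must arise from a single coordinate permutation $\sigma\in S_n$---is exactly the unresolved content of the conjecture. Your probes $0^a10^b$ and $1^a01^b$ pin down, for each individual $x$, the multiset of left/right zero- and one-counts of its symbols, but they do not by themselves force the maps $g$ induces on the different Hamming-weight layers to be realized by one common $\sigma$; ruling out sporadic bijections that mimic all these statistics without arising from a coordinate permutation is precisely what remains open. So what you have is a useful reformulation and a reasonable plan of attack, not a proof.
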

In other words, when searching for the distribution that achieves capacity, while in the case of the $\BSC$ symmetry suffices to solve the problem, in the case of the $\BDC$ it \emph{essentially buys us nothing}.

To justify our conjecture, we note that if two strings $x,y\in \{0,1\}^n$ are equivalent with respect to the $\BDC_d,$ then since by definition of equivalence there is a group element such that $g^{-1}\BDC_{1/2}y \eqdist \BDC_{1/2}x,$ in particular we must have $H(\BDC_{1/2}y) = H(\BDC_{1/2}x).$\footnote{Here the choice of $d=1/2$ is of course arbitrary; the channel group should be invariant to the parameter, as long as $d\in (0,1).$} In the data files available in arXiv together with this paper, we show numerically that for string lengths $n$ up to 14, the equivalence classes under $\til{\GG}_{\BDC}$ coincide exactly with the sets of strings of equal entropy when passed through the $\BDC_{1/2}.$

We now prove a simple but significant special case of our conjecture. Let $\Perm(\{0,1\}^*)$ be the subgroup of $\Aut(\{0,1\}^*)$ of all $h\in \Aut(\{0,1\}^*)$ that act on $\{0,1\}^n$ by a permutation of the indices. Specifically, for $h\in \Aut(\{0,1\}^*)$, we have $h\in \Perm(\{0,1\}^*)$ if and only if for all $n\in \N$, there exists a permutation $\pi_n:[n]\to[n]$ such that for all $x\in \{0,1\}^n$, we have
\[
h(x) = x_{\pi_n(1)}x_{\pi_n(2)}\dots x_{\pi_n(n)}.
\]
We note that in the case of the BSC (and memoryless channels more generally \cite{polyanskiy2012}), we have $\GG_{\BSC}\cap \Perm(\{0,1\}^*) = \Perm(\{0,1\}^*),$ and of course $\Perm(\{0,1\}^*)$ grows like $n!$, in the sense of Definition~\ref{defn:group-size}. By contrast, we prove the following.

\begin{prop}\label{prop:conj-special-case}
If $g\in \GG_{\BDC} \cap \Perm(\{0,1\}^*),$ then $g$ is either a rotation about the center, in the sense of (\ref{rotation}), or the identity.
\end{prop}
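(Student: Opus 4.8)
The plan rests on one elementary observation about the deletion channel. Fix $p\in(0,1)$ and write $\BDC=\BDC_p$; since every bit‑deletion pattern occurs with positive probability, the support of $\BDC\,y$ is exactly the set of (possibly non‑contiguous) subsequences of any string $y$. Hence if $g\in\GG_{\BDC}$ with witness $h\in\Aut(\{0,1\}^*)$, then applying the bijection $h$ to $h\,\BDC\,g(x)\eqdist\BDC\,x$ shows that $x$ and $g(x)$ always have the \emph{same number of distinct subsequences}; and since $g\in\Perm(\{0,1\}^*)$, the string $g(x)$ is a rearrangement of the bits of $x$, so it also has the same number of $1$'s. My plan is to evaluate this ``subsequence count'' on strings with one or two $1$'s and use its $g$‑invariance to pin down the index permutation $\pi_n$ that $g$ induces on $\{0,1\}^n$; write $\sigma_n:=\pi_n^{-1}$, so $g$ sends the string with $1$'s at positions $P\subseteq[n]$ to the one with $1$'s at $\sigma_n(P)$. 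For a single $1$: $g$ sends $0^{p-1}10^{n-p}$ to $0^{\sigma_n(p)-1}10^{n-\sigma_n(p)}$, the subsequence count of $0^a10^b$ is $(a+b+1)+(a+1)(b+1)$, and invariance (with $a+b=a'+b'$ fixed) forces $\{a,b\}=\{a',b'\}$, hence $\sigma_n(p)\in\{p,\,n+1-p\}$ for all $p$. So $\sigma_n$ transposes some set of ``mirror'' pairs $\{p,\,n+1-p\}$ and fixes the rest, and a one‑line check shows its fixed‑point set is itself closed under $p\mapsto n+1-p$.

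The two‑$1$ case is where the real work lies. I would first compute that the subsequence count of $0^a10^b10^c$ equals
\[
S(a,b,c)=abc+2(ab+bc+ca)+3(a+b+c)+b+3 ,
\]
which is symmetric in $a$ and $c$ but carries an \emph{extra $+b$} (the middle gap is distinguished), and that whenever $b+c=b'+c'=:\ell$,
\[
S(a,b,c)-S(a,b',c')=(b-b')\bigl[(a+2)(\ell-b-b')+1\bigr],
\]
whose bracketed factor is an integer $\equiv1\pmod{a+2}$, hence nonzero; so $S$ changes as soon as $b\neq b'$. Now suppose $\sigma_n$ is neither the identity nor the full reversal. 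Using the mirror‑symmetry of its fixed‑point set I can pick a fixed point $p_0\leq n/2$ (automatically not the central index) and the smaller element $p_1\leq n/2$ of some transposed mirror‑pair, with $p_0\neq p_1$. Applying $g$ to the length‑$n$ string with $1$'s exactly at $\{p_0,p_1\}$ produces the string with $1$'s at $\{p_0,\,n+1-p_1\}$; writing both in the form $0^a10^b10^c$ and using the $a\leftrightarrow c$ symmetry of $S$ to line up first coordinates, I would check (in the cases $p_0<p_1$ and $p_1<p_0$ separately) that the two gap‑triples share their first coordinate and their sum $b+c$, while their middle coordinates differ by $2p_1-n-1$, resp.\ $2p_0-n-1$, which is $\leq-1$. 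By the identity above their subsequence counts differ, contradicting $g\in\GG_{\BDC}$. Hence for every $n$ the permutation $\pi_n$ is the identity or the full reversal.

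It remains to forbid mixing across lengths. If $\pi_n=\mathrm{id}$ for some $n\geq2$, then $h$ preserves the distribution of $\BDC\,x$ for all $x\in\{0,1\}^n$ and so maps each subsequence‑set $\{y:y\preceq x\}$ onto itself; since $\{y:y\preceq w\}=\bigcap_{|x|=n,\,w\preceq x}\{y:y\preceq x\}$ for every $w\in\{0,1\}^{\leq n}$, an induction on length gives $h|_{\{0,1\}^{\leq n}}=\mathrm{id}$. This is incompatible with $\pi_m=\mathrm{rev}$ for any $m\geq2$: for $m\leq n$ it would force $\BDC\,(10^{m-1})\eqdist\BDC\,(0^{m-1}1)$, impossible since $0^{m-1}1\not\preceq10^{m-1}$; and for $m>n$, the string $10^{n-1}$ (fixed by $h$) lies in the support of $\BDC\,(10^{m-1})$ but not in that of $\BDC\,(0^{m-1}1)$. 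Therefore $g$ is globally the identity or globally the reversal, i.e.\ $g$ is the identity or the rotation about the center.

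I expect the main obstacle to be the two‑$1$ step. One‑$1$ strings are ``too symmetric'': their subsequence count depends only on the unordered pair of gap lengths, so the first step alone cannot distinguish a genuine partial reflection --- say, the map of $\{0,1\}^4$ swapping positions $1$ and $4$ and fixing $2$ and $3$ --- from the identity or full reversal. The asymmetry of $S(a,b,c)$ in its middle slot, which surfaces only once two $1$'s are present, is precisely what breaks this degeneracy; guessing that the subsequence count (rather than the full output distribution) is the right invariant, extracting the clean difference identity, and reducing an arbitrary partial reflection to the two tractable cases are the substantive points. A minor technical point is the combinatorial fact $\{y:y\preceq w\}=\bigcap_{|x|=n,\,w\preceq x}\{y:y\preceq x\}$ used in the last step, which follows by padding $w$ out to length $n$ so as not to create any chosen forbidden subsequence.
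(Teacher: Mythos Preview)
Your argument is correct and complete, but it follows a genuinely different route from the paper's.

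The paper first proves a structural lemma: a permutation $\pi_n$ of $[n]$ is the identity or the reversal if and only if it preserves the adjacency relation (i.e.\ $|i-j|=1 \iff |\pi_n(i)-\pi_n(j)|=1$). Then, assuming $\pi_n$ violates this local property at some pair $i,i+1$, it builds a \emph{single} test string $x=1^{i+1}0^{n-i-1}$ and shows directly that $|\supp(\BDC\,x)|=(i+2)(n-i)<|\supp(\BDC\,g(x))|$: the ``step'' string has exactly one subsequence per choice of (number of $1$'s, number of $0$'s), while $g(x)$ has all of those plus at least one extra, obtained by deleting a lone $1$ that $g$ has separated from the main block.

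Your approach instead probes $\pi_n$ with weight-$1$ and weight-$2$ strings. The weight-$1$ step already pins $\sigma_n=\pi_n^{-1}$ down to a product of mirror transpositions $p\leftrightarrow n{+}1{-}p$; the weight-$2$ step, via the explicit count $S(a,b,c)=abc+2(ab+bc+ca)+3(a+b+c)+b+3$ and the clean difference identity you derive, then rules out any proper mixture of fixed and swapped pairs. Both arguments ultimately use the same invariant---the support size of $\BDC$---but the paper leverages a single high-weight string and a one-line inequality, whereas you do a finer computation on low-weight strings. The paper's route is shorter; yours is more systematic and, as a bonus, your final paragraph establishes something the paper does not even discuss, namely that a single $g\in\GG_{\BDC}\cap\Perm(\{0,1\}^*)$ cannot mix identity at one blocklength with reversal at another (the paper's proof only shows each $\pi_n$ is identity or reversal, leaving the cross-length consistency implicit). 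The combinatorial identity $\{y:y\preceq w\}=\bigcap_{|x|=n,\,w\preceq x}\{y:y\preceq x\}$ you invoke there is indeed true---one proves it by extending any $x'$ of length $n-1$ with $w\preceq x'$ and $y\not\preceq x'$ by a single bit chosen so that $y$ is still not a subsequence---and the downward induction on $|w|$ then forces $h$ to be the identity on $\{0,1\}^{\le n}$ as you claim.
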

Before proving the proposition, we prove a useful lemma. We use the following notation: for $i\in [n]$, we let $\Ne_n(i) := \{i-1,i+1\}\cap[n]$ be the indices neighboring $i.$
\begin{lem}\label{lem:loc-prop}
A symmetry $h\in \Perm(\{0,1\}^*)$ is either a rotation about the center or the identity if and only if we have the following local property for all $n$:
\[\label{eq:LP}
j\in \Ne_n(i) \iff \pi_n(j)\in \Ne_n(\pi_n(i)) \quad \forall i\neq j\in [n]\tag{LP}.
\]
\end{lem}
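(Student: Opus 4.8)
The plan is to recognize that the local property \eqref{eq:LP} asserts exactly that, for each $n$, the index permutation $\pi_n$ is a graph automorphism of the path $P_n$ on vertex set $[n]$ — the graph in which $\{a,b\}$ is an edge precisely when $|a-b|=1$ — since $j\in\Ne_n(i)$ is equivalent to $|i-j|=1$. The lemma then reduces to the classical fact that the only automorphisms of $P_n$ are $\mathrm{id}_n$ and the reversal $i\mapsto n+1-i$.

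For the forward direction I would simply verify that the two candidate symmetries induce path automorphisms: the identity gives $\pi_n=\mathrm{id}_n$ and the rotation about the center gives $\pi_n\colon i\mapsto n+1-i$, and in either case $|i-j|=1\iff|\pi_n(i)-\pi_n(j)|=1$, so \eqref{eq:LP} holds for every $n$.

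For the backward direction I would prove this classification by the standard endpoint argument. For $n\ge 3$, the vertices $1$ and $n$ are the only ones of degree $1$ while all others have degree $2$, so the degree-preserving bijection $\pi_n$ permutes $\{1,n\}$; if $\pi_n(1)=1$ then $\pi_n(2)$ is forced to be the unique neighbor $2$ of $\pi_n(1)$, and the induction ``$\pi_n(k)=k$ forces $\pi_n(k+1)=k+1$, the only neighbor of $k$ outside $\{1,\dots,k\}$'' yields $\pi_n=\mathrm{id}_n$; the case $\pi_n(1)=n$ gives the mirror conclusion $\pi_n(i)=n+1-i$, and the cases $n\le 2$ are immediate. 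One can even work with only the implication $|i-j|=1\Rightarrow|\pi_n(i)-\pi_n(j)|=1$: a bijection sends the $n-1$ edges of $P_n$ to $n-1$ distinct edges of $P_n$, hence onto the full edge set, so the converse implication in \eqref{eq:LP} is automatic. Thus each $\pi_n$ is $\mathrm{id}_n$ or the reversal, which — read per blocklength, in the sense in which the operation ``rotation about the center'' is specified in the statement of Proposition~\ref{prop:conj-special-case} — is precisely the assertion that $h$ is the identity or the rotation about the center.

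I do not anticipate a genuine obstacle inside this lemma; the only points needing care are the degree bookkeeping in the classification of $\Aut(P_n)$ and making explicit the equivalence between \eqref{eq:LP} holding for all $n$ and each $\pi_n$ being a path automorphism. The substantive difficulty of Proposition~\ref{prop:conj-special-case} lies elsewhere — in showing that a permutation-type symmetry lying in $\GG_{\BDC}$ must satisfy \eqref{eq:LP} in the first place.
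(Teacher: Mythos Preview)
Your proposal is correct. Both you and the paper dispatch the forward direction trivially, and for the backward direction both arguments amount to classifying the automorphisms of the path graph $P_n$; the routes differ. You use the standard endpoint argument: for $n\geq 3$ the degree-$1$ vertices $\{1,n\}$ must be permuted among themselves, and then walking along the path from whichever endpoint is $\pi_n(1)$ forces every subsequent value, yielding $\mathrm{id}_n$ or the reversal. The paper instead inducts on the blocklength: it observes that \eqref{eq:LP} restricted to $[n-1]$ forces $\pi_n([n-1])$ to be a contiguous block (hence $[n-1]$ or $[n]\setminus\{1\}$), shifts to obtain a permutation $\til{\pi}_{n-1}$ of $[n-1]$ satisfying \eqref{eq:LP}, applies the inductive hypothesis, and then rules out one of the two options in each case by checking \eqref{eq:LP} at $i=n$. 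Your degree argument is slightly more direct (the induction is over positions within a fixed $n$ rather than over $n$), and your graph-automorphism framing makes the content of \eqref{eq:LP} transparent; the paper's induction is marginally more self-contained in that it never names degrees. Your side remark that the forward implication in \eqref{eq:LP} suffices (a bijection sending the $n-1$ edges of $P_n$ to edges must hit all of them) is a correct sharpening the paper does not mention. Finally, you are right to flag, and resolve, the per-blocklength reading of ``identity or rotation about the center''; the paper's proof implicitly adopts the same reading.
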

\begin{proof}
Clearly if $h$ is a rotation about the center or the identity, then it satisfies (\ref{eq:LP}) for all $n$. For the other direction, we proceed by induction on the blocklength $n$. For $n=1,2,$ our thesis holds trivially. For $n\geq 3,$ property (\ref{eq:LP}) implies the following on the first $n-1$ indices:
\[\label{eq:LP'}
j\in \Ne_n(i) \iff \pi_n(j)\in \Ne_n(\pi_n(i)) \quad \forall i\neq j\in [n-1], \tag{LP'}
\]
which implies that $\pi_n([n-1])$ is contiguous, and hence equal to $[n-1]$ or $[n]\setminus\{1\}.$ After a shift by $-1$ in the latter case, we can identify the restriction $\pi|_{[n-1]}$ with a permutation $\til{\pi}_{n-1}:[n-1]\to[n-1].$ Property (\ref{eq:LP'}) for $\pi_n$ clearly implies property (\ref{eq:LP}) for $\til{\pi}_{n-1}$. Hence by inductive hypothesis, $\til{\pi}_{n-1}$ is either the identity or a rotation about the center. Going back to $\pi_n,$ we have the following two cases:
\begin{enumerate}
    \item If $\pi_n([n-1]) = [n-1],$ then $\til{\pi}_{n-1}$ cannot be a rotation about the center, or else condition (\ref{eq:LP}) is violated at $i=n.$ Hence $\til{\pi}_{n-1} = \pi_n|_{[n-1]}$ is the identity and $\pi(n)=n,$ so $\pi_n$ is the identity as well.
    \item  If $\pi_n([n-1]) = [n]\setminus\{1\},$ then $\til{\pi}_{n-1}$ cannot be the identity, or else condition (\ref{eq:LP}) is violated at $i=n.$ Hence $\til{\pi}_{n-1} = \pi_n|_{[n-1]}-1$ is a rotation about the center, and since $\pi_n(n)=1,$ so is $\pi_n.$
\end{enumerate}
This completes the proof.
\end{proof}
Now we can prove the proposition.
\begin{proof}[Proof of Proposition~\ref{prop:conj-special-case}]
Let $g\in \GG_{\BDC}\cap \Perm(\{0,1\}^*),$ and let $\pi_n:[n]\to [n]$ be the associated index permutations at each blocklength. Suppose for contradiction that $g$ is neither the identity nor a rotation about the center. Then by Lemma~\ref{lem:loc-prop}, we must have $n$ and $i\neq j\in [n]$ such that $i$ and $j$ are neighbors but $\pi_n(i)$ and $\pi_n(j)$ are not. Assume without loss that $j=i+1$ and (by composing $\pi_n$ with a rotation about the center on the \emph{right} if needed) that $\pi_n(i) + 1<\pi_n(i+1).$ Let $k\in [n]$ be such that $\pi_n(i) < \pi_n(k)< \pi_n(i+1),$ and assume, again without loss (by composing $\pi_n$ with a rotation about the center on the \emph{left} if needed), that $k > i+1.$ Consider the string $x$ which is all-ones up to index $i+1,$ inclusive, and all-zeros afterwards (see Figure~\ref{fig:x-string}). We claim that $|\supp(\BDC_d x)| \neq |\supp(\BDC_d gx)|$ for any $d$.\footnote{Here and below $\supp$ denotes the support of the distribution.} This of course automatically implies that there can be no $h\in \Aut(\YY^*)$ such that $\BDC_d x \eqdist h \BDC_d gx,$ yielding a contradiction of $g\in \GG_{\BDC}$ and proving the proposition. To prove the claim, note first that, by a simple calculation, we have $|\supp(\BDC_d x)| = (i+2)(n-i)$: there is exactly one possible subsequence of $x$ for each valid choice of a number of ones and a number of zeros in the output. Moreover since $g$ preserves string weight, $x$ and $g(x)$ have the same total number of ones and zeros. Hence $|\supp(\BDC_d g(x))| \geq (i+2)(n-i)$ by the same argument: there is a (now not necessarily unique) subsequence of $g(x)$ for each valid choice of a number of ones and a number of zeros in the output. To show that the inequality is strict, it suffices to exhibit two indices $s\neq t$ such that $g(x)_s=g(x)_t=1$, and such that deleting index $s$ (and nothing else) from $g(x)$ will produce a different output than deleting index $t$. A moment of thought reveals that $s=\pi_n(i)$ and $t=\pi_n(i+1)$ works, since they don't lie in the same contiguous block of ones by construction (see Figure~\ref{fig:x-string}).
\begin{figure}
    \centering
    \includegraphics[width=0.9\linewidth]{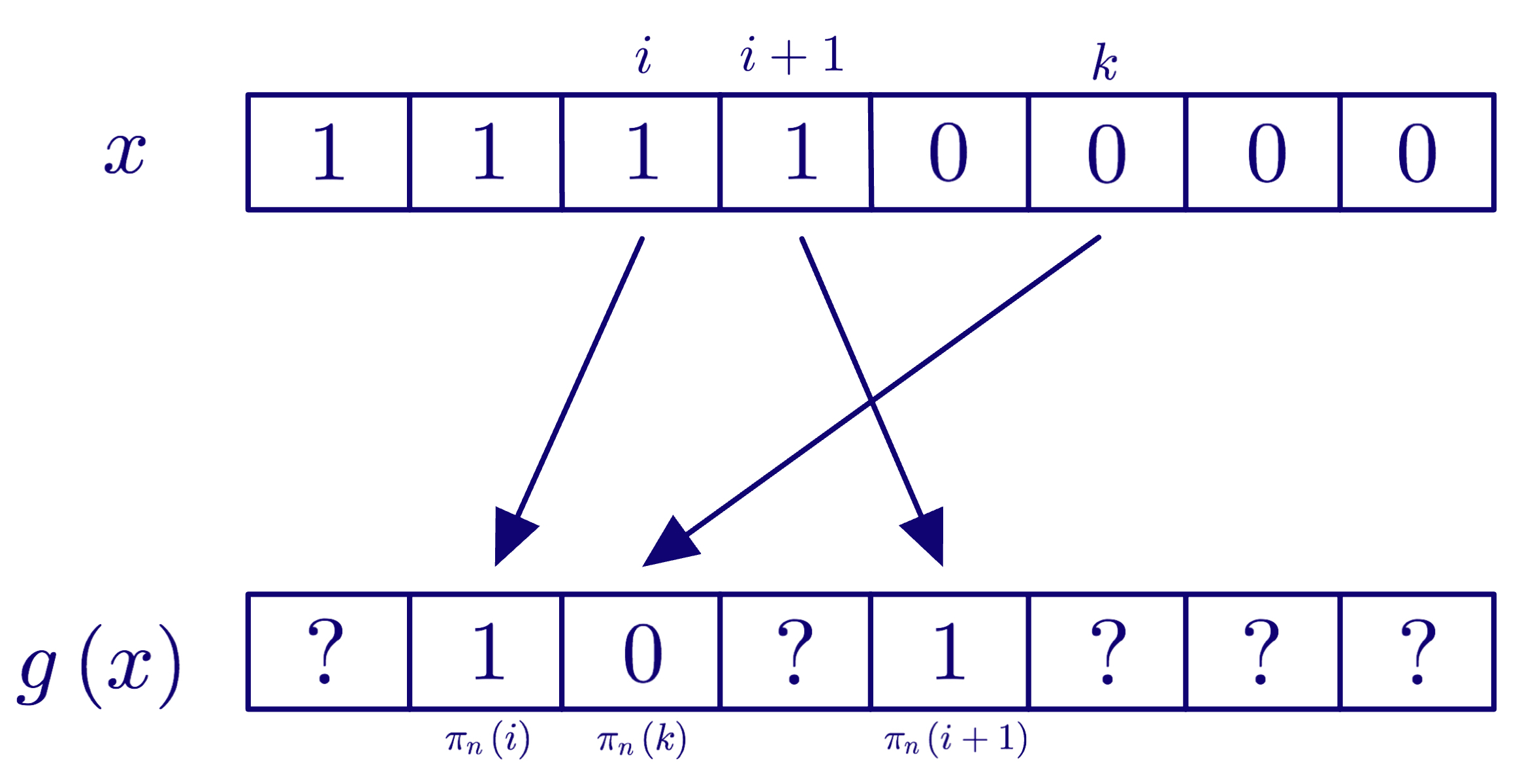}
    \caption{An example of the string $x\in\{0,1\}^n$ used in our proof, and the group element $g$ acting on it via the permutation $\pi_n$ of its indices. The question marks denote bits of $g(x)$ which we cannot determine just with our assumptions on $g.$}
    \label{fig:x-string}
\end{figure}

\end{proof}

\section{Acknowledgements}
The author would like thank Ray Li, Mary Wootters and Marco Mondelli for illuminating discussions and invaluable feedback on early versions of the ideas presented in this paper. We also thank anonymous reviewers for their comments, which led to the simplification of various proofs, and for pointing our some errors and reference omissions (mentioned above) in a prior version of this paper.




\bibliographystyle{IEEEtran}
\bibliography{IEEEabrv,refs}

\section{Appendix}
\begin{proof}[Proof of Lemma~\ref{lem:channel-group-is-group}]
We verify the axioms:

\begin{enumerate}
    \item \emph{Inclusion of 1.} We trivially have $1 \in \GG_{\Ch}.$
    \item \emph{Closedness under inverses.} Suppose $g\in \GG_{\Ch}$. Clearly $g^{-1}\in \Aut(\XX^*)$ and $g^{-1}$ preserves string length. Moreover by assumption there exists $h\in \Aut(\YY^*)$ such that $h\Ch gx \eqdist \Ch x$ for all $x \in \XX^*.$ Multiplying by $h^{-1}$ on the left and letting $x'=gx,$ we have $\Ch x' \eqdist h^{-1}\Ch g^{-1}x'$. But $\{gx : x\in \XX^*\} = \XX^*$ since $g \in \Aut(\XX^*),$ so $g^{-1} \in \GG_{\Ch}.$
    \item \emph{Closedness under multiplication.} Let $g_1,g_2 \in\GG_{\Ch}.$ Clearly we have $g_1g_2 \in \Aut(\XX^*),$ and $g_1g_2$ preserves string length. Moreover, letting $h_1,h_2$ be such that $h_i\Ch g_i x \eqdist \Ch x$ for all $x\in \XX^*, i=1,2,$ we have
    \begin{align*}
        h_2h_1 \Ch g_1g_2 x &=h_2 (h_1\Ch g_1)(g_2x) \\
        &\eqdist h_2 \Ch g_2x \\
        &\eqdist \Ch x,
    \end{align*}
    so $g_1g_2 \in G_\DD(\Ch)$.
\end{enumerate}
\end{proof}

\begin{proof}[Proof of Theorem \ref{thm:max-mutual-inf}]
Fix a channel $\Ch$ over an alphabet $\XX$ and a group element $g\in \GG_{\Ch}.$ Suppose $\DD$ is the mutual-information-maximizing distribution of the statement of the theorem, $X\sim \DD$ and $Y=\Ch X.$ Then let $X' = gX$ and $Y'=\Ch X'.$ We will prove two claims: (1) that $I(X;Y) = I(X';Y'),$ and (2) that, under the assumptions of the theorem, the mutual information $I(X;Y)$ is maximized by a unique distribution of $X.$ The combination of these two claims yields $X'\eqdist X,$ proving the theorem.

For the first claim, we show the stronger statement that, for $h$ as in Definition~\ref{defn:channel-group}, we have $(g^{-1}X',hY')\eqdist(X,Y)$; since $\XX\times\YY\ni(x,y)\mapsto (g^{-1}x, hy)$ is a bijection, this then immediately gives (1). Indeed, we have
\begin{align*}
    \pr((g^{-1}X', &hY') = (x,y)) \\
    &=\pr(g^{-1} g X = x,h\Ch g X  = y) \\
    &= \pr( X = x,h\Ch g X  = y) \\
    &= \pr( X = x)\pr(h\Ch g X  = y| X = x) \\
    &= \pr( X = x)\pr(h\Ch g x  = y) \\
    &= \pr( X = x)\pr(\Ch x  = y) \\
    &= \pr( X = x)\pr(\Ch X  = y| X = x) \\
    &= \pr(X = x, \Ch X  = y) \\
    &= \pr((X,Y) =(x,y)),
\end{align*}
as desired.

For the second claim, if $X\sim \DD_X$, it suffices to show that the function $\DD_X \mapsto I(X;Y)$ is strictly concave.\footnote{Note that the full rank assumption on the transition matrix is necessary here: while entropy is always a strictly concave function of the distribution, mutual information is only a concave (and not necessarily strictly concave) function of the input distribution.} But writing
\[
I(X;Y) = H(Y) - H(Y|X),
\]
the second term is a linear function of $\DD_X$ (it can be written $H(Y|X) = \E_{x\sim \DD_X} H(Y|X=x)$), and the first is a strictly concave function (the entropy) of the distribution of $Y$ (call it $\DD_Y$). Now we may obtain $\DD_Y$ as a linear function of the distribution of $X$, i.e. $\DD_Y = M_n \DD_X$ (where $M_n$ is the $n$th transition matrix of $\Ch$), and since $M_n$ is full-rank, $\DD_X\mapsto \DD_Y$ is an injective linear function, hence $\DD_X\mapsto H(Y)$ is strictly concave. This proves (2) and hence the theorem.
\end{proof}

\end{document}